\newcommand{\E}{\mathbb{E}}                  
\newcommand{\bo}[1]{\mathbf{#1}}              
\newcommand{\bom}[1]{\boldsymbol{#1}}    
\newcommand{\be}{\beta}
\newcommand{\ka}{\kappa}
\newcommand{\pdim}{p}
\newcommand{\ndim}{n}
\renewcommand{\Pi}{{\mathcal P}}
\newcommand{\beq}{\begin{equation}}
\newcommand{\eeq}{\end{equation}}
\newcommand{\bmat}{\begin{pmatrix}}
\newcommand{\emat}{\end{pmatrix}}
\newcommand{\I}{\bo I}
\newcommand{\A}{{\bo A}}
\renewcommand{\S}{{\bo S}} 
\newcommand{\x}{\bo x}
\newcommand{\Fr}{\mathrm{F}}
\newcommand{\M}{\bom \Sigma}  
\newcommand{\Mn}{\M_0}  	    
\newcommand{\Mor}{\bo C}	    
\newcommand{\MSE}{\mathrm{MSE}}
\newcommand{\sca}{\sigma}
\DeclareMathOperator{\tr}{tr}
\DeclareMathOperator{\Tr}{tr}
\DeclareMathOperator{\cov}{cov}
\newcounter{ctheorem}
\newtheorem{theorem}[ctheorem]{Theorem}
\newcounter{clemma}
\newtheorem{lemma}[clemma]{Lemma}
\newcounter{ccorollary}
\newtheorem{corollary}[ccorollary]{Corollary}
\title{M-estimators of scatter with eigenvalue shrinkage}
\name{Esa Ollila$^{\star }$ \qquad Daniel P. Palomar$^{\dagger}$ \qquad  Fr\'ed\'eric Pascal$^{\ddagger}$}
\address{$^{\star}$Department of Signal Processing and Acoustics, Aalto University, Finland \\    
$^{\dagger}$ The Hong Kong University of Science and Technology, Hong Kong \\
$^{\ddagger}$ L2S / CentraleSupel\'ec, University Paris-Saclay, France}
\begin{document}
%
\maketitle
\begin{abstract}
A popular regularized (shrinkage) covariance estimator is the shrinkage sample covariance matrix (SCM)  which shares the same set of eigenvectors as the SCM but shrinks its eigenvalues toward its grand mean. In this paper, a more general approach is considered in which  the SCM is replaced  by an M-estimator of scatter matrix and a fully automatic data adaptive method to compute the optimal shrinkage parameter with minimum mean squared error is proposed. Our approach permits the use of any weight function such as Gaussian, Huber's, or $t$ weight functions, all of which are commonly used in M-estimation framework. Our simulation examples illustrate that shrinkage M-estimators based on the proposed optimal tuning combined with robust weight function do not loose in performance to shrinkage SCM estimator when the data is Gaussian, but provide significantly improved performance when the data is sampled from a heavy-tailed distribution. 
\end{abstract}
\begin{keywords}
M-estimators, sample covariance matrix, shrinkage, regularization, elliptical distributions
\end{keywords}
\section{Introduction} \label{sec:intro}

Consider a sample of $p$-dimensional vectors $\{ \x_i \}_{i=1}^n$ sampled from a distribution of a random vector $\x$ with $\E[\x]= \bo 0$.   One of the first tasks in the analysis of high-dimensional data is to estimate the covariance matrix.   The most commonly used estimator is the sample covariance matrix (SCM), $\S=\frac 1 n \sum_{i=1}^\ndim \x_i \x_i^\top$, 
but its main drawbacks are  its loss of efficiency when sampling from distributions which have longer
tails than the multivariate normal (MVN) distribution and its sensitivity to outliers.  Although being unbiased estimator of 
the covariance matrix $\cov(\x) = \E[\x \x^\top]$ for any sample length $n \geq 1$, it is well-known that the eigenvalues are poorly estimated when $n$ is not orders of magnitude larger than $p$. In such cases, one commonly uses a  regularized SCM (RSCM) with a linear shrinkage towards a scaled identity matrix, 
\beq \label{eq:LW}
\S_\be = \be \S + (1-\be) \frac{\Tr(\S)}{p} \I,
\eeq 
where $\be \in (0,1]$ is the regularization parameter.  The RSCM $\S_\be$ shares the same set of eigenvectors as the SCM $\S$, but its eigenvalues are shrinked towards the grand mean of the eigenvalues. That is, if $d_1,\ldots, d_p$ denote the eigenvalues of $\bo S$, then $ \be d_j + (1-\be)\bar d$ are the eigenvalues of  $\bo S_\be$, where $\bar d=p^{-1}\sum_j d_j$. Optimal computation of  $\be$ such that $\S_\be$ has minimum mean squared error (MMSE) has been developed for example in  \cite{ledoit2004well,ollila2019optimal}. 

The estimator in \eqref{eq:LW} remains sensitive to outliers and non-Gaussianity.  M-estimators of scatter \cite{maronna1976robust} are popular robust alternatives to SCM. We consider the situation where  $n>p$ and hence a conventional M-estimator of scatter  $\hat \M$ exists and can be used  in place of the SCM $\bo S$  in  \eqref{eq:LW}.  We then propose a fully automatic data adaptive method to  compute the optimal shrinkage parameter $\be$.    First, we derive an approximation for parameter $\be$ that attains the minimum MMSE and then propose a data adaptive method for its computation.  The benefit of our approach is that it can be easily applied to any M-estimator using any weight function $u(t)$.   Our simulation examples illustrate that a shrinkage M-estimator using the proposed tuning and a robust loss function does not loose in performance to  optimal shrinkage SCM estimator when the data is Gaussian, but is able to provide significantly improved performance in the case of heavy-tailed data.  

{\it Relations to prior work}: Earlier work, \cite{ollila2014regularized,pascal2014generalized, sun2014regularized,chen2010shrinkage,couillet2014large,zhang2016automatic},  proposed regularized M-estimators of scatter matrix either  by  adding a penalty function to M-estimation objective function or a diagonal loading term to the respective first-order solution (M-estimating equation).  We consider  a simpler approach that uses conventional M-estimator and shrinks its  eigenvalues to grand mean of the eigenvalues.  Our approach permits  computation of the optimal shrinkage parameter for any M-estimation weight function. 

The paper is structured as follows. Section~\ref{sec:shrinkM} introduces the proposed  shrinkage M-estimator framework. 
Section~\ref{sec:beta_opt}  discusses automatic computation of the optimal shrinkage parameter under the assumption of sampling from unspecified elliptical distribution.  Section~\ref{sec:simul} contains simulation studies.

\section{Shrinkage M-estimators of scatter} \label{sec:shrinkM}

In this paper, we assume that $n>p$ and consider an M-estimator of scatter matrix \cite{maronna1976robust} that 
solves an estimating equation
\beq \label{eq:Mest}
\hat \M = \frac{1}{n} \sum_{i=1}^n u(\x_i^\top \hat \M^{-1} \x_i) \x_i \x_i^\top ,
\eeq 
where $u: [0, \infty) \to [0,\infty)$ is a non-increasing weight function.  An M-estimator is a sort of adaptively weighted sample covariance matrix with weights determined by function $u(\cdot)$.   To guarantee existence of the solution,  it is required that the data verifies the condition stated in   \cite{kent1991redescending}. An M-estimator of scatter which shrinks the eigenvalues towards the grand mean of the eigenvalues is then defined as:  
\beq \label{eq:shrinkMest}
\hat \M_\be = \be \hat \M + (1-\be) \frac{\Tr(\hat \M)}{p} \I. 
\eeq 
 For example, the RSCM $\bo S_\be$ is obtained when one uses the Gaussian weight function $u(t)=1$ $\forall t$ since then $\hat \M = \bo S$. Other popular choices are Huber's weight function
\beq \label{eq:huber_u}
u_{\mbox{\tiny H}} (t;c) =  \max(-c^2,\min(t,c^2))/b,
\eeq 
where   $c > 0$ is a tuning constant, chosen by the user, and $b$ is  a scaling factor used to obtain Fisher consistency at the multivariate normal (MVN) distribution $\mathcal N_\pdim(\bo 0 ,\M)$:
\[
b = F_{\chi^2_{\pdim+2}}(c^2) + c^2(1-F_{\chi^2_{\pdim}}(c^2))/\pdim.
 \]
We choose $c^2$ as $q$th upper quantile of $\chi^2_\pdim$: $c^2 = F^{-1}_{\chi^2_p}(q)$. 
Another popular choice is  $t$-MLE weight function 
\beq \label{eq:t_u}
u_{\mbox{\tiny T}}(t ; \nu) = \frac{\pdim  + \nu }{\nu +t}
\eeq 
in which case the corresponding M-estimator $\hat \M$  is also the maximum likelihood estimate (MLE) of the scatter matrix parameter of a  $p$-variate $t$-distribution with $\nu>0$ degrees of freedom. 

An M-estimator is a consistent estimator of the M-functional of scatter matrix,  defined as 
\vspace*{-2pt} \beq \label{eq:Mfun}
\Mn = \E\big[ u(\x^\top \Mn^{-1} \x)\x\x^\top \big].
\vspace*{-2pt} \eeq  
If the population M-functional $\Mn$ is known,  then by defining a  \emph{1-step estimator}
\vspace*{-5pt} \beq \label{eq:Mor} 
\Mor =    \frac{1}{n} \sum_{i=1}^n u(\x_i^\top \Mn^{-1} \x_i) \x_i \x_i^\top   
\vspace*{-5pt} \eeq
we can compute 
\vspace*{-2pt} \beq\label{eq:Mor2}
 \bo C_\be =   \be  \Mor + (1-\be)[\tr(\bo C)/p] \I 
\vspace*{-2pt} \eeq 
as  a proxy for $\hat \M_\be$.  Naturally, such an estimator is fictional, as the initial value $\Mn$ is unknown. 
 The 1-step estimator $\Mor$ is an unbiased estimator of  $\Mn$,  i.e., $\E[\Mor] = \Mn$. 

Ideally we would like to find the value of  $\beta  \in [0,1]$ for which  the corresponding estimator $\hat \M_\be$ attains the minimum MSE, that is, 
\beq \label{eq:optimal_beta_o}
\beta_o = \arg \min_\be  \Big\{ \mathrm{MSE}(\hat \M_\be) =  \mathbb{E} \Big[ \big\| \hat \M_\be - \M_0 \|^2_{\mathrm{F}} \Big] \Big\},
\eeq 
where $\| \cdot \|_{\Fr}$ denotes the Frobenius matrix norm ($\| \A \|_{\Fr}^2 = \tr(\bo A^\top \bo A)$).  
Since solving \eqref{eq:optimal_beta_o} is not doable due to the implicit form of M-estimators, we look for an approximation:
\beq \label{eq:RFegSCM_oracle} 
\be_{o}^{\mathrm{app}} =   \underset{\be}{\arg \min}  \ \Big \{ \mathrm{MSE}(\bo C_\be) =  \E \Big[ \big\| \bo C_\be - \Mn \big\|^2_{\rm F}  \Big] \Big\} .   
\eeq 
Such approach was also used  in \cite{chen2011robust} in deriving an optimal parameter for shrinkage Tyler's M-estimator of scatter. 

Before stating the expression for $ \be_{o}^{\mathrm{app}}$ we  introduce a \emph{sphericity}  measure of scatter: 
\beq \label{eq:gamma} 
 \gamma =  \frac{p \tr(\Mn^2)}{\tr(\Mn)^2}   .
\eeq 
Sphericity $\gamma$  measures how close $\Mn$ is to a scaled identity matrix: $\gamma \in [1,p]$ 
where $\gamma=1$ if and only if $\Mn \propto \I$ and $\gamma = p$ if $\Mn$ has rank equal to 1.

\begin{theorem}  \label{th:beta0}    
Suppose $\x_1, \ldots, \x_n$  is an i.i.d.\ random sample from any $p$-variate distribution (not necessarily elliptical distribution),    and 
$u$ is a weight function for which the expectation $\E[ \tr(\bo C^2)]$ exists. 
The oracle  parameter $\be_o^{\mathrm{app}}$ in \eqref{eq:RFegSCM_oracle} 
 is
\begin{align}
\be_o^{\mathrm{app}}  &= \frac{\| \Mn - \eta_o \I \|_{\Fr}^2}{\E \big [ \big\| \bo C -   (\tr(\bo C)/p) \I \big\|_{\Fr}^2 \big]} \label{eq:beta0id} \\ 
	    &= \frac{p (\gamma-1) \eta^2_o}{\E[\tr(\bo C^2)] - p^{-1} \E[ \tr(\bo C)^2]} \label{eq:beta0id2} 
\end{align}
where $\eta_o  = \tr(\Mn)/p$ and $\gamma$ is defined in \eqref{eq:gamma}. Note that $\be_o^\mathrm{app} \in \left[0,1\right)$ and the value of the MSE
at the optimum  is
\begin{equation} \label{eq:MSEopt}
 \MSE(\bo C_{\be_o^{\mathrm{app}}})  =  \frac{ \E[\tr(\bo C)^2 ] - \tr(\Mn)^2}{p}+ ( 1-  \beta_o^{\mathrm{app}})  \big\| \Mn - \eta_o  \big \|^2_{\Fr}.
\end{equation}
\end{theorem}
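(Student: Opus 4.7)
The plan is to turn $\mathrm{MSE}(\bo C_\be)$ into a one-variable quadratic in $\be$ and minimize in closed form. Setting $\hat\eta := \tr(\bo C)/p$, I would write
\[
 \bo C_\be - \Mn \;=\; \be\,(\bo C - \hat\eta \I) \;+\; (\hat\eta \I - \Mn),
\]
expand the Frobenius squared norm, and take expectations to reach $\mathrm{MSE}(\bo C_\be) = a\be^2 - 2b\be + c$, whose minimizer is $b/a$. The only piece of the expansion that is not a one-liner is the expected cross term: since $\E[\bo C] = \Mn$ and $\tr(\bo C - \hat\eta\I)=0$, direct expansion gives $\tr\!\big((\bo C - \hat\eta\I)(\hat\eta\I - \Mn)\big) = \hat\eta\,\tr(\Mn) - \tr(\bo C\,\Mn)$, whose expectation is $p\eta_o^2 - \tr(\Mn^2) = -\|\Mn - \eta_o\I\|_\Fr^2$. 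Hence $b = \|\Mn - \eta_o\I\|_\Fr^2$, which by the definition of $\gamma$ equals $p\eta_o^2(\gamma-1)$.

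For the denominator I would use $\tr(\bo C) = p\hat\eta$ to collapse $a = \E[\tr(\bo C^2) - 2\hat\eta\tr(\bo C) + p\hat\eta^2] = \E[\tr(\bo C^2)] - p^{-1}\E[\tr(\bo C)^2]$. Combining the two pieces gives both \eqref{eq:beta0id} and \eqref{eq:beta0id2}. For the MSE at the optimum \eqref{eq:MSEopt}, I would evaluate the quadratic at its minimum, $c - b^2/a = c - \be_o^{\mathrm{app}}\|\Mn - \eta_o\I\|_\Fr^2$, and read $c$ off the orthogonal bias/variance decomposition $\hat\eta\I - \Mn = (\hat\eta - \eta_o)\I + (\eta_o\I - \Mn)$ (orthogonal in Frobenius because $\tr(\eta_o\I - \Mn)=0$), obtaining $c = p\,\mathrm{Var}(\hat\eta) + \|\Mn - \eta_o\I\|_\Fr^2 = p^{-1}\big(\E[\tr(\bo C)^2] - \tr(\Mn)^2\big) + \|\Mn - \eta_o\I\|_\Fr^2$. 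Rearranging reproduces \eqref{eq:MSEopt}.

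The one step that needs a little care is the claim $\be_o^{\mathrm{app}} < 1$, i.e.\ $a > b$. Writing $\bo C - \hat\eta\I = (\bo C - \Mn) + (\Mn - \eta_o\I) - (\hat\eta-\eta_o)\I$ and taking the expectation of the squared Frobenius norm (the only surviving cross term, $\E[(\hat\eta - \eta_o)\tr(\bo C-\Mn)]$, equals $p\,\mathrm{Var}(\hat\eta)$) yields the identity $a = \E[\|\bo C - \Mn\|_\Fr^2] + b - p\,\mathrm{Var}(\hat\eta)$. So it remains to show $\E[\|\bo C - \Mn\|_\Fr^2] \ge p\,\mathrm{Var}(\hat\eta)$, which I would deduce from
\[
 p\,\mathrm{Var}(\hat\eta) \;=\; p^{-1}\sum_{i,j}\cov([\bo C]_{ii},[\bo C]_{jj}) \;\le\; \sum_i \mathrm{Var}([\bo C]_{ii}) \;\le\; \sum_{i,j}\mathrm{Var}([\bo C]_{ij}) \;=\; \E[\|\bo C - \Mn\|_\Fr^2],
\]
using Cauchy-Schwarz on each entrywise covariance and then adding the off-diagonal variances. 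This bounding is the only non-mechanical ingredient; everything else is routine quadratic minimization.
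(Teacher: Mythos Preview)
Your argument is correct and follows the same overall strategy as the paper---write $\MSE(\bo C_\be)$ as a quadratic in $\be$ and minimize---but you use a slightly different decomposition. The paper splits $\bo C_\be - \Mn = \be(\bo C-\Mn) + (1-\be)(\hat\eta\I-\Mn)$, obtaining a quadratic $\be^2 a_1 + (1-\be)^2 a_2 + 2\be(1-\be)a_3$ whose minimizer is $(a_2-a_3)/((a_1-a_3)+(a_2-a_3))$; you instead split $\bo C_\be - \Mn = \be(\bo C-\hat\eta\I) + (\hat\eta\I-\Mn)$, which lands directly on $a\be^2-2b\be+c$ with $b/a$ matching \eqref{eq:beta0id} without further algebra. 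Your version is a bit cleaner, and you also supply two things the paper's proof omits: the derivation of the optimal MSE value \eqref{eq:MSEopt} via the orthogonal split of $\hat\eta\I-\Mn$, and an argument for $\be_o^{\mathrm{app}}\le 1$ via the identity $a-b=\E\|\bo C-\Mn\|_\Fr^2 - p\,\mathrm{Var}(\hat\eta)$ together with the Cauchy--Schwarz bound. One small point: your chain of inequalities only gives $a\ge b$, hence $\be_o^{\mathrm{app}}\le 1$; the \emph{strict} inequality $\be_o^{\mathrm{app}}<1$ stated in the theorem would additionally require $\bo C$ to be nondegenerate (e.g.\ some off-diagonal entry has positive variance, or the diagonal entries are not all perfectly correlated), which the paper does not spell out either.
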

\begin{proof}
Write $L(\be) = \MSE(\bo C_\be) = \E[\| \bo C_\be -\Mn\|_{\Fr}^2]$. Then note that 
\begin{align*}
L(\be)
&= \E \big[  \big \| \be \bo C + (1- \be) p^{-1} \tr(\bo C) \I - \Mn \big\|^2_{\Fr} \big] \\ 
&= \E \Big[  \big \| \be(\bo C - \Mn)  + (1- \be) \big ( p^{-1} \tr(\bo C)  \I - \Mn \big) \big\|^2_{\Fr} \Big] \\   
&= \be^2 a_1 +  (1- \be)^2  a_2 + 2\be(1-\be) a_3, 
\end{align*}
where 
 $a_1 =\E \big[\big\| \bo C - \Mn \big\|_{\Fr}^2 \big] =  \E \big[ \tr(\bo C^2) \big] - \tr(\Mn^2)$, and
 \begin{align*}
a_2 &= \E \big [ \big\|  p^{-1} \tr(\bo C) \I  - \Mn  \big\|^2_{\Fr} \big] \\
&= a_3+ \tr(\Mn^2) -  p \eta_o^2  = a_3 + p ( \gamma -1)\eta_o^2  \\ 
a_3 &= p^{-1} \E \big[  \tr(\bo C) \tr(\bo C - \Mn) \big]  
= p^{-1}\E\big[	\tr (\bo C)^2  \big]  - \eta_o^2 p .
\end{align*} 
Note that $L(\be)$ is a convex quadratic function in $\be$ with a unique minimum given by 
\[
\be_o^\mathrm{app} =  \frac{a_2-a_3}{ (a_1-a_3) + (a_2-a_3)}.  
\]
Substituting the expressions for constants $a_1, a_2$ and $a_3$ into $\be_o^\mathrm{app} $ yields the stated result. 
\end{proof}

Next we derive a more explicit form of  $\be_o^{\mathrm{app}}$ by assuming that the data is generated from unspecified elliptically symmetric distribution. 

\section{Shrinkage parameter computation}  \label{sec:beta_opt}

Maronna \cite{maronna1976robust} developed M-estimators of scatter matrices originally  within the framework of elliptically symmetric distributions \cite{fang1990symmetric,ollila2012complex}.   The probability density function (p.d.f.) of centered (zero mean) elliptically distributed
random vector $\x \sim \mathcal E_\pdim(\bo 0,\M,g)$ is
\[
	f(\x)
	= C_{\pdim,g} |\M|^{-1/2} g\big(\x^\top \M^{-1} \x\big),
\]
where  $\M$ is the positive definite symmetric  matrix parameter, called the scatter matrix, $g: \left[0,\infty\right) \to \left[0,\infty\right)$ is the
\emph{density generator}, which is a fixed function that is independent of
$\x$ and $\M$, and $C_{\pdim,g}$ is a normalizing constant ensuring
that $f(\x)$ integrates to 1. 
The density generator $g$ determines
the elliptical distribution. For example, the MVN distribution is obtained when
$g(t)=\exp(-t/2)$ and the $t$-distribution with $\nu$ d.o.f., denoted $\x \sim t_{\nu}(\bo 0, \M,g)$, is obtained when $g(t)= (1+  t/\nu)^{-(p+\nu)/2}$. Then the weight function  for the MLE of scatter corresponds to  the case that 
$
u(t) \propto-   g'(t)/g(t).
$
 This yields  \eqref{eq:t_u} as the weight function for 
the MLE of scatter when $\x \sim t_{\nu}(\bo 0, \M,g)$.  If the second moments of $\x$ exists, then $g$ can be defined so that $\M$ represents the covariance matrix of $\x$, i.e., $\M=\cov(\x)$; see  \cite{ollila2012complex} for details.

When $\x \sim \mathcal E_\pdim(\bo 0,\M,g)$, then the M-functional $\Mn$ in \eqref{eq:Mfun} is related to underlying scatter matrix parameter $\M$  via the relationship
\[
\Mn=\sca \M, 
\]
where $\sca>0$ is a solution to an equation
\begin{equation}\label{eq:sca} 
\E \bigg[ \psi \bigg( \frac{ \x^\top \M^{-1} \x }{\sca} \bigg)  \bigg] = p,  
\end{equation}
 where $\psi(t)=u(t)t$.  
Often $\sca$  needs to be  solved numerically from \eqref{eq:sca} but in some cases an analytic expression 
can be derived. If  $\x \sim \mathcal E_p(\bo 0,\M,g)$ and the used weight function matches with the data distribution, so 
$u(t) \propto-   g'(t)/g(t)$, then $\sigma = 1$.

Next we derive expressions for  $ \E[\tr(\bo C)^2 ]$ and $ \E[\tr(\bo C^2)]$  appearing in the denominator of $\beta_o^{\mathrm{app}}$ in \eqref{eq:beta0id2}. 
They depend on a constant $\psi_1$ (which depend on weight function $u$ via $\psi(t) = u(t) t$) as follows:
\beq \label{eq:psi1}
\psi_1 = \frac{1}{p(p+2) } \E \Big[   \psi \!\bigg( \frac{ \x^\top \M^{-1} \x }{\sca}\Big)^2   \bigg]  ,
\eeq
 where the expectation is w.r.t.   $\x  \sim \mathcal E_p(\bo 0, \M,g)$. 

\begin{lemma} \label{lem:EtrC} Suppose $\x_1, \ldots, \x_n$  is an i.i.d.\ random sample from $\mathcal E_p(\bo 0, \M,g)$. Then 
\begin{align*}
\E \big[\tr \! \big(\Mor^2\big) \big]  &= 
\left( 1 + \frac{2 \psi_1-1}{n} \right) \tr(\Mn^2) +  \frac{\psi_1}{n}\tr(\Mn)^2   ,  \\
\E[\tr(\bo C)^2] 
	&=  \frac{2 \psi_1}{n}   \tr(\Mn^2) +   \Big(1+ \frac{\psi_1-1}{n}\Big) \tr(\Mn)^2, 
\end{align*}
given that expectation \eqref{eq:psi1} exists. 
\end{lemma}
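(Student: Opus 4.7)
The plan is to exploit the stochastic representation of elliptical vectors: write $\x_i = r_i \M^{1/2} \bo u_i$, where $\bo u_i$ is uniform on the unit sphere $S^{p-1}$, the radial part $r_i \ge 0$ is independent of $\bo u_i$, and the pairs $(r_i,\bo u_i)$ are i.i.d.\ across $i$. The key observation is that $\x_i^\top \Mn^{-1}\x_i = r_i^2/\sca$, so
$u(\x_i^\top \Mn^{-1}\x_i)\,\x_i\x_i^\top = \sca\,\psi(r_i^2/\sca)\,\M^{1/2}\bo u_i\bo u_i^\top \M^{1/2}$, where $\psi(t)=u(t)t$. Setting $\tau_i = \psi(r_i^2/\sca)$, the 1-step estimator becomes
\[
\bo C = \frac{\sca}{n}\sum_{i=1}^n \tau_i\,\M^{1/2}\bo u_i\bo u_i^\top \M^{1/2},
\]
so that the two quantities of interest expand to
\[
\tr(\bo C^2) = \frac{\sca^2}{n^2}\sum_{i,j}\tau_i\tau_j\, (\bo u_i^\top \M \bo u_j)^2, \qquad \tr(\bo C)^2 = \frac{\sca^2}{n^2}\sum_{i,j}\tau_i\tau_j\, (\bo u_i^\top \M \bo u_i)(\bo u_j^\top \M \bo u_j).
\]

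Next I would take expectations after splitting each double sum into its $n$ diagonal ($i=j$) and $n(n-1)$ off-diagonal ($i\neq j$) contributions. Since $\tau_i$ depends only on $r_i$ it is independent of $\bo u_i$, and distinct samples are mutually independent, so the off-diagonal terms factorize cleanly. Two scalar inputs are needed: the defining relation \eqref{eq:sca} yields $\E[\tau_i] = \E[\psi(r^2/\sca)] = p$, and the definition \eqref{eq:psi1} of $\psi_1$ gives $\E[\tau_i^2] = p(p+2)\psi_1$.

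The remaining ingredients are the standard moments of $\bo u$ uniform on $S^{p-1}$: for any symmetric $\A$,
\[
\E[\bo u^\top \A \bo u] = \frac{\tr(\A)}{p}, \quad \E[(\bo u^\top \A \bo u)^2] = \frac{2\tr(\A^2)+\tr(\A)^2}{p(p+2)}, \quad \E[(\bo u_i^\top \A \bo u_j)^2] = \frac{\tr(\A^2)}{p^2}
\]
for independent $\bo u_i,\bo u_j$. Applying these with $\A = \M$, collecting the diagonal and off-diagonal pieces, and finally substituting $\Mn = \sca \M$ (so $\sca^2 \tr(\M^k) = \tr(\Mn^k)$) yields both claimed identities after elementary simplification.

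The main difficulty is bookkeeping rather than anything conceptual. The diagonal fourth-order term $\E[(\bo u^\top \M \bo u)^2]$ contributes simultaneously to the $\tr(\Mn^2)$ and $\tr(\Mn)^2$ parts, and one must carefully pair the combinatorial counts $n$ and $n(n-1)$ with the right moments so that the coefficients $(2\psi_1-1)/n$, $\psi_1/n$ and $(\psi_1-1)/n$ come out correctly from the cancellation between diagonal and off-diagonal contributions.
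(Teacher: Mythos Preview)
Your proof is correct and complete: the stochastic representation $\x_i = r_i\M^{1/2}\bo u_i$, the separation into diagonal and off-diagonal terms, and the sphere-moment identities all check out, and carrying the algebra through yields exactly the two stated formulas. The paper itself omits the proof for lack of space, so there is nothing to compare against, but the route you take---reducing everything to the scalar moments $\E[\tau]=p$, $\E[\tau^2]=p(p+2)\psi_1$ together with the standard second- and fourth-order moments of the uniform distribution on $S^{p-1}$---is the natural one and is almost certainly what the authors had in mind.
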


\begin{proof} Omitted due to lack of space. 
\end{proof} 

\begin{theorem}\label{th:beta0ell}
Let $\x_{1},\ldots,\x_{n}$ denote an i.i.d. random sample from an
elliptical distribution $\mathcal E_p(\bo 0, \M,g)$.  Then the oracle parameter $\be_o^\mathrm{app}$ that
minimizes the MSE in Theorem~1 is
\begin{align} \label{eq:beta0ell}
 \be_o^{\mathrm{app}}    &=    \dfrac{ \gamma-1}{  (\gamma  -1)(1-1/n)  + \psi_1(1-1/p)(2 \gamma+p)/n}    
\end{align}
 where   $\gamma$ is the sphericity measure defined in~\eqref{eq:gamma}.
 \end{theorem}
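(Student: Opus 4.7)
The plan is to substitute the elliptical-case moment formulas of Lemma~\ref{lem:EtrC} directly into the general expression \eqref{eq:beta0id2} from Theorem~\ref{th:beta0} and simplify. Since Theorem~\ref{th:beta0} was proved for an arbitrary distribution (subject only to existence of $\E[\tr(\bo C^2)]$, which under the ellipticity assumption is guaranteed as soon as the constant $\psi_1$ in \eqref{eq:psi1} is finite), no new optimization argument is needed here. The theorem reduces to an algebraic identity, and the whole task is to massage the denominator of \eqref{eq:beta0id2} into the stated form by invoking the definition \eqref{eq:gamma} of sphericity.

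First, I would compute the denominator $D := \E[\tr(\bo C^2)] - p^{-1}\E[\tr(\bo C)^2]$ of \eqref{eq:beta0id2} using Lemma~\ref{lem:EtrC}. Collecting the $\tr(\Mn^2)$ and $\tr(\Mn)^2$ terms separately gives
\begin{align*}
D &= \tr(\Mn^2)\Bigl[1 + \tfrac{2\psi_1-1}{n} - \tfrac{2\psi_1}{np}\Bigr]  + \tr(\Mn)^2\Bigl[\tfrac{\psi_1}{n} - \tfrac{1}{p}\bigl(1+\tfrac{\psi_1-1}{n}\bigr)\Bigr].
\end{align*}
The bracket multiplying $\tr(\Mn^2)$ simplifies to $(1-1/n) + 2\psi_1(1-1/p)/n$, and the bracket multiplying $\tr(\Mn)^2$ collapses (after placing everything over $np$) to $\psi_1(1-1/p)/n - (1-1/n)/p$.

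Second, I would divide numerator and denominator of \eqref{eq:beta0id2} by $p\eta_o^2 = \tr(\Mn)^2/p$. Using the identity $\tr(\Mn^2)/(p\eta_o^2) = \gamma$ from the definition \eqref{eq:gamma} of sphericity, together with $\tr(\Mn)^2/(p\eta_o^2) = p$, the numerator becomes $\gamma-1$ and the denominator becomes
\[
\gamma\bigl[(1-1/n) + 2\psi_1(1-1/p)/n\bigr] + p\bigl[\psi_1(1-1/p)/n - (1-1/n)/p\bigr].
\]
A short rearrangement groups the $(1-1/n)$ terms to yield $(\gamma-1)(1-1/n)$ and groups the $\psi_1(1-1/p)/n$ terms to yield $\psi_1(1-1/p)(2\gamma+p)/n$, which is exactly the denominator claimed in \eqref{eq:beta0ell}.

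There is no genuine conceptual obstacle here: the result follows from Theorem~\ref{th:beta0} and Lemma~\ref{lem:EtrC} by bookkeeping alone. The only place one must be careful is in keeping track of the $1/n$ and $1/p$ factors when combining the two bracketed expressions, and in recognizing that the mixed term $\psi_1(1-1/p)/n$ naturally combines $2\gamma$ (from the $\tr(\Mn^2)$ contribution) with $p$ (from the $\tr(\Mn)^2$ contribution) into the clean factor $(2\gamma+p)$. Once this grouping is spotted, the stated formula \eqref{eq:beta0ell} drops out immediately.
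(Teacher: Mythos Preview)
Your proposal is correct and follows exactly the approach indicated in the paper: substitute the moment expressions from Lemma~\ref{lem:EtrC} into the denominator of \eqref{eq:beta0id2} and simplify. The paper's own proof is a one-line reference to this substitution without writing out the algebra, whereas you have carried out the bookkeeping explicitly; your intermediate steps (isolating the $(1-1/n)$ and $\psi_1(1-1/p)/n$ groupings) are all correct.
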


\begin{proof} Follows from Theorem~\ref{th:beta0} after substituting the values for $\E \big[\tr \! \big(\Mor^2\big) \big]$ and  $\E[\tr(\bo C)^2]$  
derived in Lemma~\ref{lem:EtrC} in the denominator of $\be_o^{\mathrm{app}}$ in \eqref{eq:beta0id2}. 
\end{proof}

If one uses Gaussian loss function $u(t) \equiv 1$, then one needs to assume that the 4th-order moments exists and one may assume w.l.o.g. that the scatter matrix parameter equals the covariance matrix \cite{ollila2012complex}, i.e.,  $\M= \cov(\x)$ , so  $\Mn=  \M$ and $\sigma=1$. Furthermore, it holds that  $\hat \M= \S$ and  $\bo C_\be=\S_\be$ and hence $\be_o  = \be_o^\mathrm{app}$. Finally,  we may relate $\psi_1$ with an elliptical kurtosis~\cite{muirhead:1982} parameter $\kappa$, defined as
\vspace*{-7pt} \beq \label{eq:kappa} 
\ka=  \frac{\E [ \| \M^{-1/2}\x \|^4]}{p(p+2)}  - 1 .
\vspace*{-5pt} \eeq  
Elliptical kurtosis  vanishes, i.e., $\ka=0$, when $\x  \sim \mathcal N_p(\bo 0, \M)$.  

\begin{corollary} Let $\x_{1},\ldots,\x_{n}$ denote an i.i.d. random sample from an
elliptical distribution $\mathcal E_p(\bo 0, \M,g)$ with finite 4th order moments and covariance matrix $\M=\cov(\x)$. 
Then  the optimal tuning parameter of the shrinkage SCM estimator $\S_\be$ in \eqref{eq:LW} is 
\vspace*{-9pt}\begin{align} 
\be_o   =  \arg \min_\be  \, \mathbb{E} \big[ \big\| \S_\be - \M \|^2_{\mathrm{F}} \big] =\dfrac{\gamma-1}{  \gamma-1 +  a  }  ,\vspace*{-8pt} \label{eq:beta0_ell_scm}
\end{align} 
\vspace*{-5pt} where
\vspace*{-5pt}\[
a = \frac{\ka(2 \gamma(1-1/p) + p-1)}{n}   + \frac{ \gamma (1-2/p)+p}{n}.
\vspace*{-2pt}\] 
\end{corollary}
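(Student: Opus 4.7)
The plan is to specialize Theorem~\ref{th:beta0ell} to the Gaussian weight function $u(t)\equiv 1$ and to rewrite $\psi_1$ in terms of the elliptical kurtosis $\kappa$.

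First I would observe that the Corollary is really a direct instance of Theorem~\ref{th:beta0ell}, not an independent calculation. With $u\equiv 1$ the M-estimating equation \eqref{eq:Mest} collapses to $\hat\M=\S$, the 1-step proxy $\bo C$ in \eqref{eq:Mor} also equals $\S$, and so $\bo C_\be=\S_\be=\hat\M_\be$. Consequently the oracle parameter $\be_o$ in \eqref{eq:beta0_ell_scm} coincides with the approximate oracle $\be_o^{\mathrm{app}}$ from Theorem~\ref{th:beta0ell}, no approximation being incurred. Under the stated moment assumption, one can furthermore parametrize $g$ so that $\M=\cov(\x)$, which gives $\E[\x^\top\M^{-1}\x]=p$ and hence $\sca=1$ from \eqref{eq:sca}; in particular $\Mn=\M$.

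Next I would evaluate the constant $\psi_1$. Since $\psi(t)=u(t)t=t$ and $\sca=1$, the expectation in \eqref{eq:psi1} becomes
\[
\psi_1=\frac{1}{p(p+2)}\,\E\bigl[(\x^\top\M^{-1}\x)^2\bigr]=\frac{1}{p(p+2)}\,\E\bigl[\|\M^{-1/2}\x\|^4\bigr]=\ka+1,
\]
using the definition \eqref{eq:kappa} of elliptical kurtosis.

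Finally I would substitute $\psi_1=\ka+1$ into \eqref{eq:beta0ell} and simplify the denominator. Writing
\[
(\gamma-1)(1-1/n)+(\ka+1)(1-1/p)(2\gamma+p)/n
=(\gamma-1)+\frac{1}{n}\bigl[(\ka+1)(1-1/p)(2\gamma+p)-(\gamma-1)\bigr],
\]
one isolates the $\ka$ contribution as $\ka(1-1/p)(2\gamma+p)/n=\ka(2\gamma(1-1/p)+p-1)/n$ and checks that the remainder $(1-1/p)(2\gamma+p)-(\gamma-1)$ expands to $\gamma(1-2/p)+p$. Adding the two pieces yields $a$ exactly as stated, completing the identification $\be_o=(\gamma-1)/(\gamma-1+a)$.

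There is no real obstacle here; the only thing to be careful about is tracking the factors of $(1-1/p)$ so that the $\ka$ and non-$\ka$ pieces regroup into the advertised expression for $a$. The conceptual point worth highlighting is that $\be_o^{\mathrm{app}}=\be_o$ in the Gaussian-weight case, so the bound in Theorem~\ref{th:beta0ell} specializes to the known exact optimum of the Ledoit--Wolf type shrinkage SCM rather than to an approximation of it.
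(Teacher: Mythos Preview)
Your proposal is correct and follows essentially the same approach as the paper: specialize Theorem~\ref{th:beta0ell} to $u\equiv 1$, observe $\bo C_\be=\S_\be$ and $\Mn=\M$, identify $\psi_1=1+\kappa$ from \eqref{eq:psi1} and \eqref{eq:kappa}, and substitute into \eqref{eq:beta0ell}. The paper's own proof is a three-sentence version of exactly this; you simply spell out the algebraic regrouping of the denominator into the stated form of $a$, which the paper leaves implicit.
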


\begin{proof}  The result follows from Theorem~\ref{th:beta0ell} since $\bo C_\be = \bo S_\be$ and the M-functional for Gaussian loss is  $\Mn = \cov(\x) = \M$.  Since for Gaussian loss, $\psi(t)= t$, we notice from  \eqref{eq:psi1}  that $\psi_1 = 1 + \kappa$.  Plugging $\psi_1=1+\ka$ into \eqref{eq:beta0ell}  yields  the stated expression. 
\end{proof}

\section{Simulation studies}  \label{sec:simul}

We  compute different shrinkage M-estimators  $\hat \M_\be$ detailed below.   We use acronym {\bf Huber} to refer to the shrinkage M-estimator $\hat \M_\be$ that uses Huber's weight $u(\cdot) = u_{\mbox{\tiny H}}(\cdot; c) $ with  threshold  $c^2$ corresponding to $q=0.7$ quantile.  Shrinkage parameter is computed as  $\beta= \beta_o^{\mathrm{app}}(\hat \gamma, \hat \psi_{1})$.  As an estimator $\hat \gamma$ of $\gamma$ we use the same estimate as in \cite{zhang2016automatic,ollila2019optimal} and  $\hat \psi_{1}$ is an estimate of $\psi_1$,  computed as 
\vspace*{-8pt}\beq \label{eq:phi1-est}
\hat \psi_{1} =  \frac 1 n \sum_{i=1}^n  \frac{ [t_i  u(t_i)]^2}{p (p+2)}  ,
\vspace*{-8pt}
\eeq 
where $t_i = \x_i^\top \hat \M^{-1} \x_i$ and  $\hat \M$ is the corresponding Huber's M-estimator.   Huber's weight function is standardized to be Fisher consistent for Gaussian samples, meaning that \eqref{eq:sca}  holds with $\sigma=1$ when $\x \sim \mathcal N_p(\bo 0, \M)$. 
Since \eqref{eq:phi1-est} ignores estimation of $\sigma$, some loss in accuracy of this estimate  of $\psi_1$ is  expected for non-Gaussian data.

Acronym {\bf t-MLE} refers to the shrinkage M-estimator  of scatter using  weight function $u(\cdot) = u_{\mbox{\tiny T}}(\cdot ; \nu)$, where d.o.f. parameter $\nu$ is estimated from the data. This means that  $\sigma=1$ can be assumed  since the scaling factor $\sigma$ equals one for an MLE of the scatter matrix parameter.  The shrinkage parameter is computed as  $\beta= \beta_o^{\mathrm{app}}(\hat \gamma, \hat \psi_{1})$, where  $\hat \gamma$ is as earlier  and  $\hat \psi_1$ is computed as in \eqref{eq:phi1-est} but using $u(\cdot) = u_{\mbox{\tiny T}}(\cdot ; \nu)$ and  $\hat \M$ being the corresponding M-estimator. 

Acronym {\bf Gauss} refers to the shrinkage M-estimator  of scatter using  Gaussian weight function $u(t)=1$, i.e.,  $\hat \M_\be=\S_{\be}$.  The shrinkage parameter is computed as $\be=\beta_o(\hat \kappa,\hat \gamma)$ with $\beta_o$ given by   \eqref{eq:beta0_ell_scm} and $\hat \kappa$ is an estimate  of elliptical kurtosis $\kappa$ proposed in  \cite{ollila2019optimal}.  Finally, acronym {\bf LW} refers to estimator proposed by Ledoit and Wold \cite{ledoit2004well}.  LW estimator also uses RSCM $\S_{\be}$, where parameter $\be$ is computed in a different manner than for Gauss estimator.

\begin{figure}[!t]
\centerline{\includegraphics[width=0.37\textwidth]{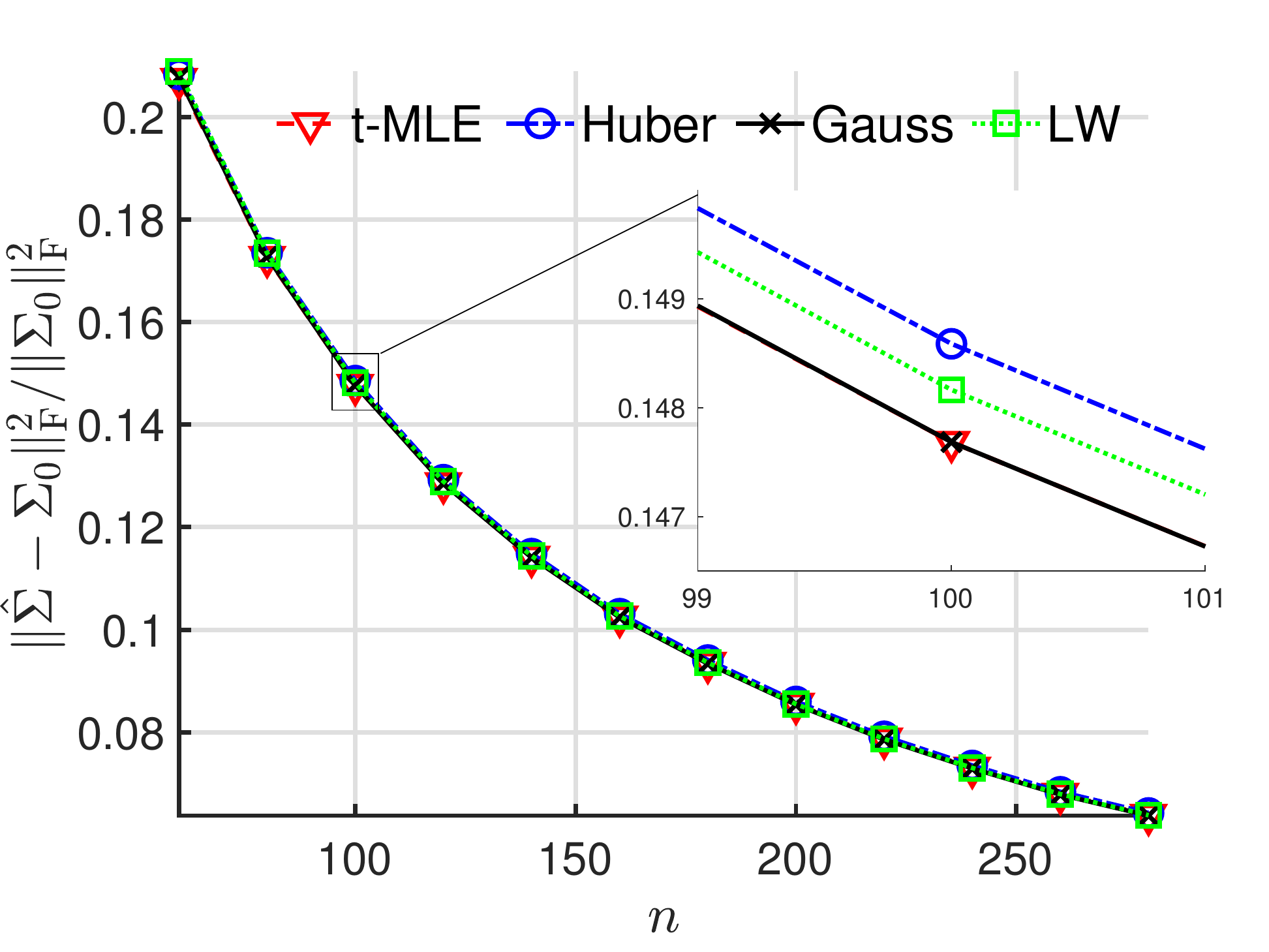}}
\vspace{-0.4cm}
\caption{\small NMSE of  the estimators a function of $n$ when samples are drawn from MVN distribution with an AR(1) covariance structure withs $\varrho=0.6$ and $p=40$.} \label{fig:AR1}
\centerline{\includegraphics[width=0.265\textwidth]{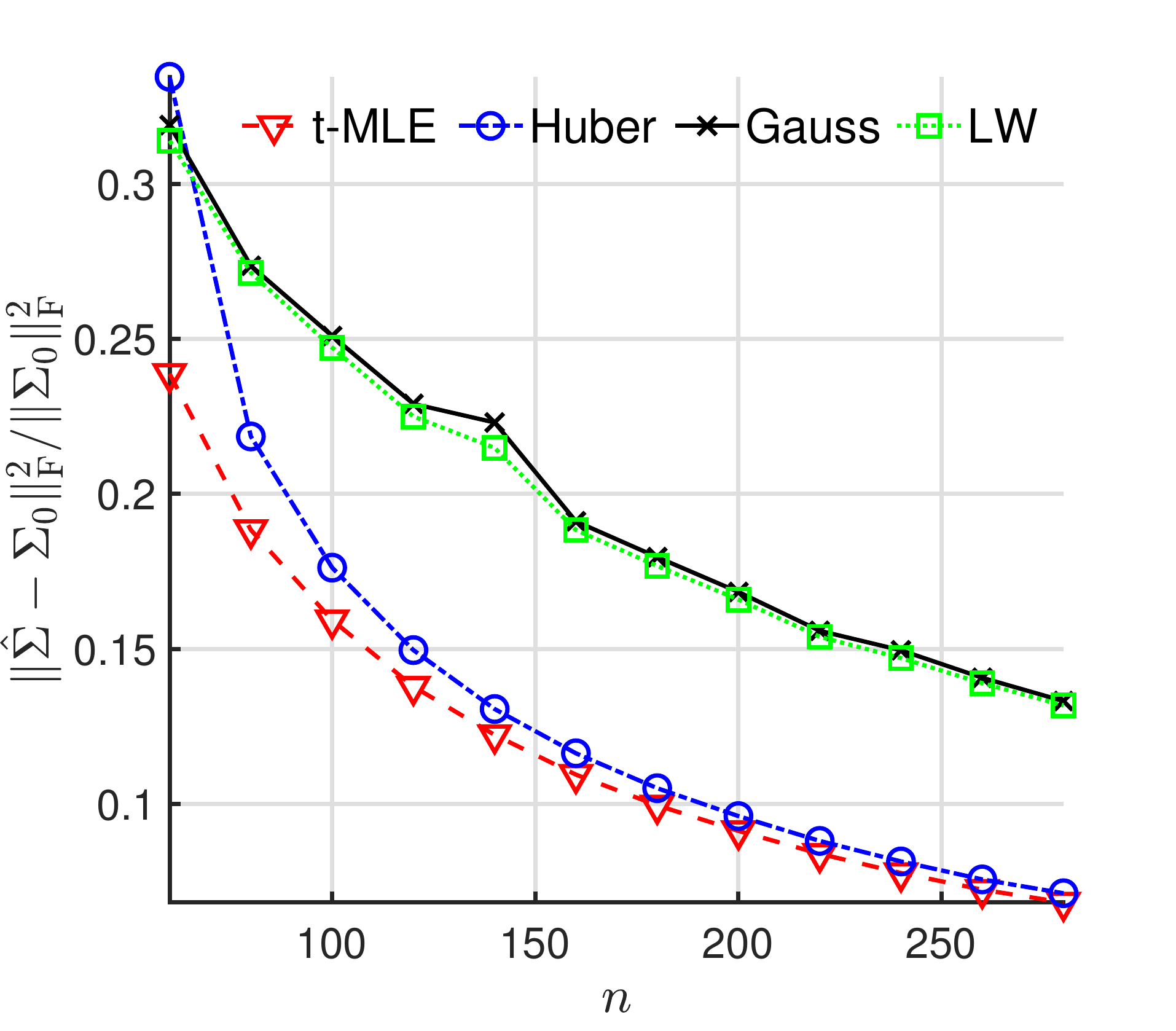}  \hspace{-0.7cm}
\includegraphics[width=0.265\textwidth]{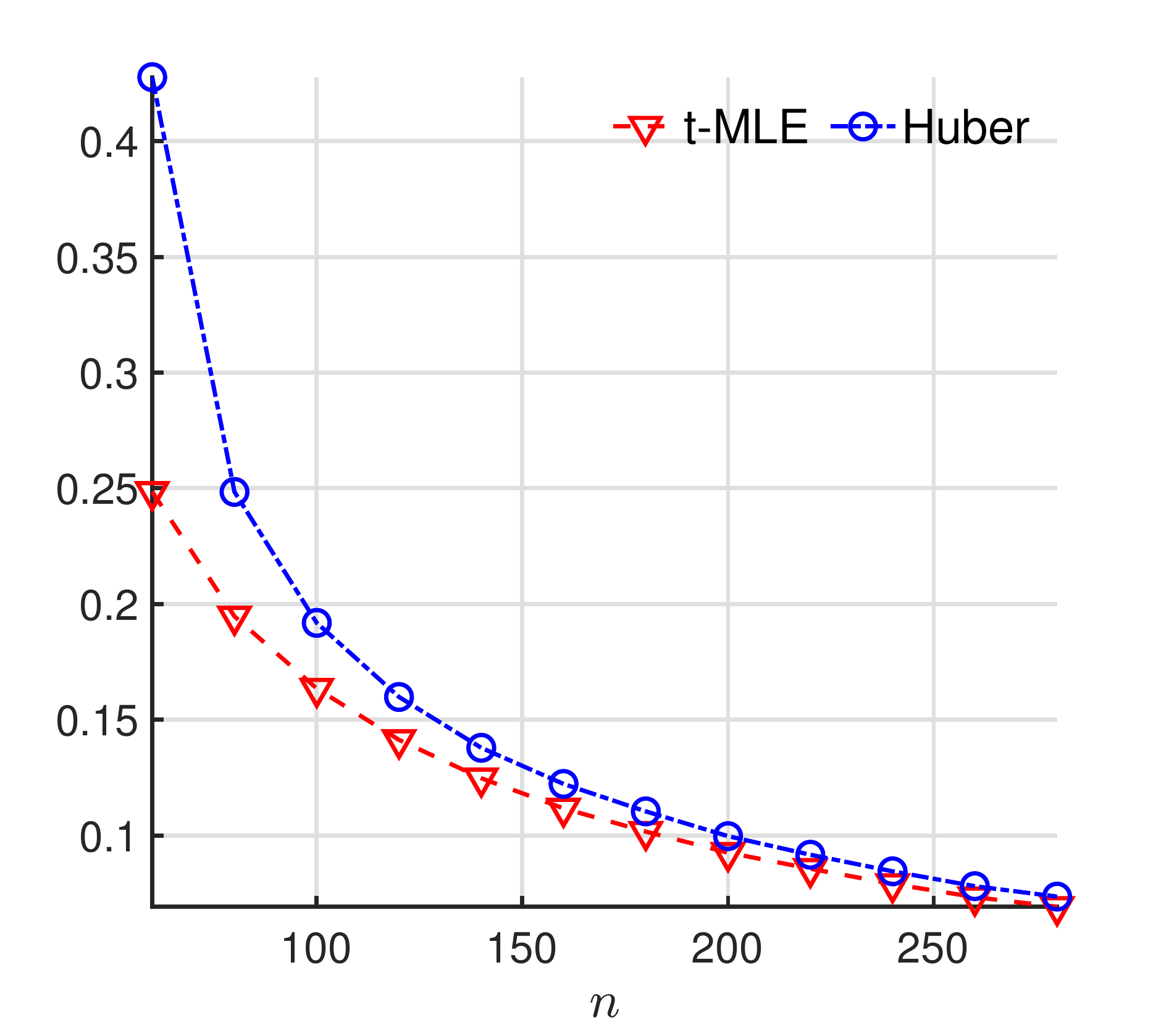}} 
\vspace{-0.4cm}
\caption{ NMSE of the estimators as a  function of $n$  when samples are draw from a $p$-variate $t_5$ (left panel) and $t_3$ (right panel) distribution with an AR(1) covariance structure; $\varrho=0.6$ and $p=40$.} \label{fig:AR1_t}
\centerline{\includegraphics[width=0.37\textwidth]{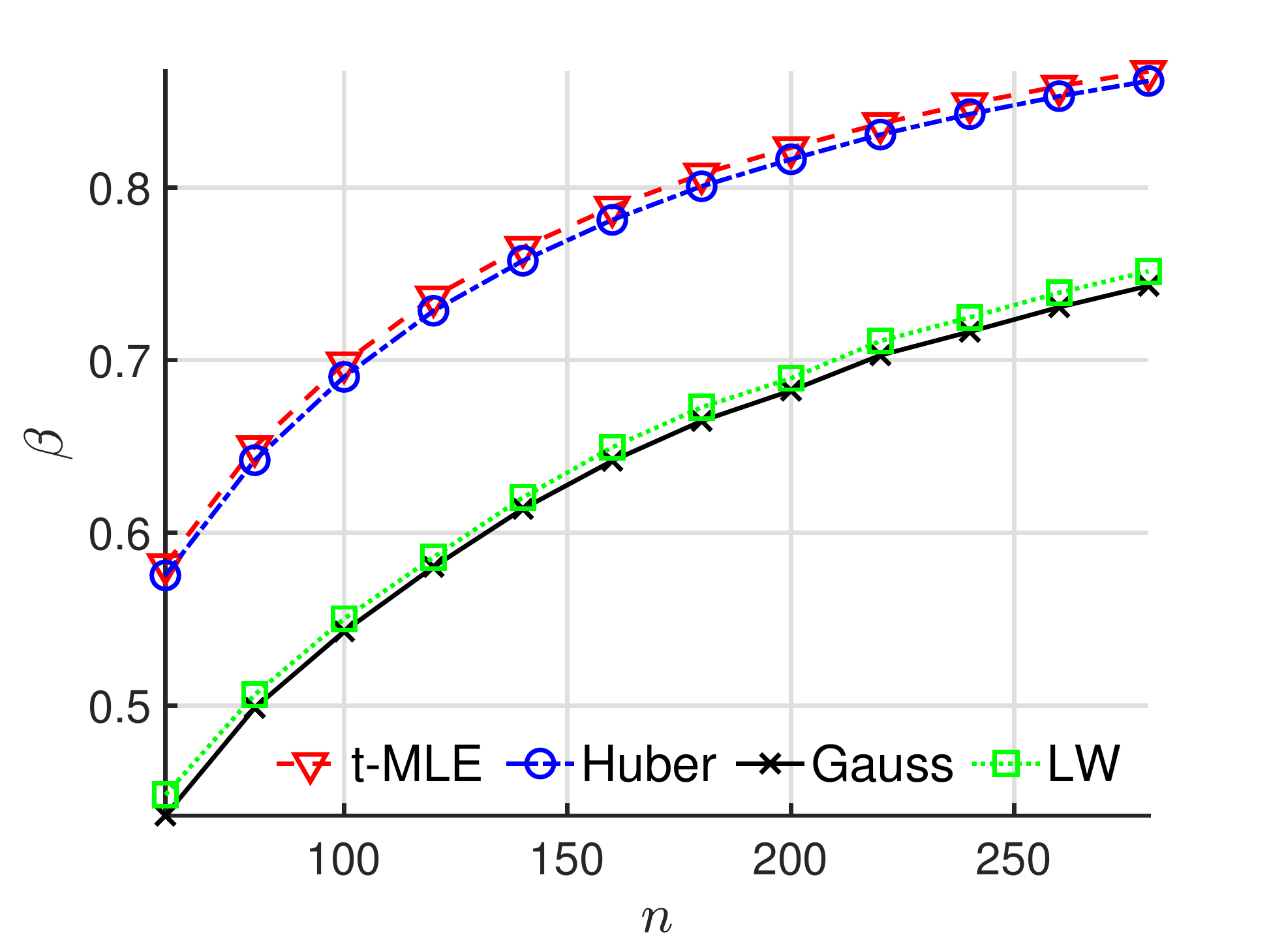}}
\vspace{-.4cm}
\caption{ Shrinkage parameter $\beta$  as a function of $n$  when samples are drawn from a  $p$-variate $t_5$-distribution with an  AR(1) covariance structure; $\varrho=0.6$ and $p=40$.} \label{fig:AR1_beta}\vspace{-.2cm}
\end{figure}

 We generated the data from an elliptical distribution $\mathcal E_p(\bo 0, \M,g)$, where 
the   scatter matrix  $\M$ has an AR(1) structure, $(\M)_{ij} =  \eta \varrho^{|i-j|}$, where $\varrho \in (0,1)$ and 
 scale  parameter  $\eta = \tr(\M)/\pdim =10$.  When $\varrho
\downarrow 0$, then $\M$ is close to an identity matrix scaled by $\eta$, and when $\varrho
\uparrow 1$, $\M$ tends to a singular matrix of rank 1.  Parameter $\varrho$ is set to $\varrho=0.6$. 
The dimension is $\pdim = 40$ and $\ndim$ varies from 60 to 280. 

In our first study, samples are drawn from a MVN distribution and  the normalized MSE $\| \hat \M_\be - \Mn \|_{\Fr}^2/\| \Mn \|_{\Fr}^2$  as a function of sample length $n$ is depicted in Figure~\ref{fig:AR1}.  Results are averages over 2000 Monte-Carlo trials. All estimators are performing well;  Gauss and t-MLE are performing slightly better than LW or Huber but differences are marginal. 

Figure~\ref{fig:AR1_t}  shows the NMSE figures in the case that samples are from $t_5$- and $t_3$-distribution, respectively. In the latter case, the non-robust Gauss and LW estimator provided large NMSE and are therefore not shown in the plot. This was expected as $t_3$-distribution is heavy-tailed with non-finite kurtosis. As can be seen, the robust Huber and t-MLE shrinkage estimators  provide significantly improved performance when the data is sampled from a heavy-tailed $t_5$ or $t_3$-distribution. We can also notice that t-MLE estimator that adaptively estimates the d.o.f. $ \nu$ from the data is able to outperform the Huber's M-estimator due to the data adaptivity. 

Figure~\ref{fig:AR1_beta} depicts the (average) shrinkage parameter $\beta$ as a function of $n$ in the case that samples are from a $p$-variate $t_5$ distribution.  As can be seen the robust shrinkage estimators (Huber and t-MLE) use larger shrinkage parameter value $\beta$ than Gauss and LW.  

\section{Conclusions and perspectives}

This work proposed an original and fully automatic approach to compute an optimal shrinkage parameter in the context of heavy-tailed distributions and/or in presence of outliers. It has been shown that the performance of the method is similar to the optimal one when the data is Gaussian while it outperforms shrinkage Gaussian-based methods when the data distribution turns out to be non-Gaussian. This paper opens several ways, notably considering the case when $p>n$.

\end{document}